\def \rR {\mathbb{R}}
\def \bf#1 {\textbf{#1 }}
\def \sumt {\sum\limits}
\providecommand{\keywords}[1]
{
  \small	
  \textbf{\textit{Keywords:}} #1
}
\def \sumt {\sum\limits}
\DeclareMathOperator{\diam}{diam}
\DeclareMathOperator{\dist}{dist}
\DeclareMathOperator{\Rad}{Rad}
\renewenvironment{proof}{\begin{addmargin}[1em]{0em}\begin{newproof}}{\end{newproof}\end{addmargin}\qed}
\newtheorem{thm}{Theorem}
\newtheorem{lm}{Lemma}
\newtheorem{cor}{Corollary}
\begin{document}

\title{Relations between average clustering coefficient and another centralities in graphs}
\author{Mikhail Tuzhilin%
  \thanks{Affiliation: Moscow State University, Electronic address: \texttt{mtu93@mail.ru};}
}
\date{}
\maketitle

\begin{abstract}
Relations between average clustering coefficient and global clustering coefficient, local efficiency, radiality, closeness, betweenness and  stress centralities were obtained for simple graphs.
\end{abstract}

\keywords{Networks, centralities, local and global properties of graphs, Watts-Strogatz clustering coefficient, global clustering coefficient.}


\section{Introduction.} 
The centrality measure was introduced by Bonacich in~\cite{Bonacich}. Centrality is a local (with relation to a vertex) or global (with relation to a whole graph) measures in networks. There are many centrality measures (or shortly centralities) such as local efficiency, radiality, closeness, betweenness, stress centralities, etc. Calculation of centralities is very useful for finding intrinsic properties of ``real'' networks (which can be found in applications)~\cite{Borgatti}-~\cite{Lee}. One of most important centrality measure is a clustering coefficient, that differentiate ``real'' graphs (or small-world networks) and random generated graphs~\cite{Watts}.

There are two definitions of clustering coefficient: the average clustering coefficient introduced by Watts-Strogatz~\cite{Watts} and the global clustering coefficient. It was shown in~\cite{Estrada} that for windmill graphs the average clustering coefficient and the global clustering coefficient asymptotically different. More precisely, the average clustering coefficient tends to 1 and the global clustering coefficient tends to 0 if the number of vertices increasing. In this paper, author provides two large class of graphs for which the average clustering coefficient is less or equal than the global cluster coefficient and vise versa.

Nowadays, there are also many articles where centrality measures are used for calculations and predictions of certain network characteristics, but a very few with theoretical basis. In the article~\cite{Strang} relations between different centralities were obtained, also an estimation of the local efficiency was obtained in terms of the average clustering coefficient. In this article relations between the average clustering coefficient and another centralities are proved for simple undirected graphs, in particular, it is proved that the estimation of local efficiency in terms of the average cluster coefficient is in fact an equality. 

\section{Main definitions.}

All subsequent definitions are given for a simple undirected graph $G$ without pendant vertices. It also can be defined to a simple graph with pendant vertices if every function where \{vertex degree $- 1$\} is in the denominator are defined to be equal to $0$ for all cases where vertices degrees equal to $1$, but this will be omitted in this article for the sake of brevity.

Let's give necessary denotations. Let's denote by
\begin{itemize}
    \item $V(G)$ the set of vertices,\;\; $E(G)$ the set of edges and\;\; $A = \{a_{ij}\}$ adjacency matrix of graph $G$.
    \item Neighbourhood $N(v)$ --- the set of vertices which adjacent to the vertex $v$,
    \item $N'(v) = N(v)\bigcup {v}$ subgraph in $G$ on these vertices,
    \item $\bar{f}(x_1, x_2, ... , x_k)$, where $f$ is any function $V\times V \times ... \times V\rightarrow\rR$, the restriction of this function on $N'(v)$ (for example $\bar L(x,y)$ will be the average shortest path between $x$ and $y$ with restriction to subgraph $N'(v)$),
    \item $d_i = deg(v_i)$,
    \item $n = |V(G)|,\;\; m = |E(G)|$,
    \item $X(i) = X(v_i)$ for any $X$ --- set or function corresponding to vertex $v_i$,
\end{itemize}

Let's give definitions of centralities:
\begin{enumerate}
    \item \bf{Diameter} $diam(G) = \max_{s,t\in V(G)} dist(s,t)$.
    \item \bf{Density} $D(G) = \frac {\text{number of edges in }G} {\text{maximum possible number of edges in }G}= \frac{2 m} {n(n-1)}$.
    \item \bf{Global efficiency} $E_{glob}(G) = \frac 1 {n (n-1)} \sumt_{s\neq t} \frac 1 {dist(s,t)}$.
    \item \bf{Average shortest path length} $L(G) = \frac 1 {n(n-1)} \sumt_{s\neq t} dist(s,t)$.
    \item \bf{Local cluster coefficient} 
    
    $c_i = c(i) = \frac {\text{number of edges in }N(i)} {\text{maximum possible number of edges in }N(i)}= \frac{2 |E(N(i))|} {d_i(d_i-1)}$. 
    \item \bf{Average clustering coefficient} 
    
    $C_{WS}(G) = \frac 1 {n} \sumt_{i\in V(G)} c_i = \frac 1 {n} \sumt_{i\in V(G)}  \frac{2 |E(N(i))|} {d_i(d_i-1)} = \frac 1 n \sumt_{i\in V(G)} \frac {\sumt_{j, k\in V(G)} a_{ij} a_{jk} a_{ki}} {d_i (d_i-1)}$.
    \item \bf{Global clustering coefficient} 
    
    $C(G) = \frac {\text{number of closed triplets in $G$}} {\text{number of all triplets in $G$}} = \frac {\sumt_{i, j, k\in V(G)} a_{ij} a_{jk} a_{ki}} {\sumt_{i\in V(G)} d_i (d_i-1)}$.
    \item \bf{Betweenness centrality} $BC(i) = \sumt_{s,t\in V(G),\; s\neq t\neq i} \frac {\sigma_{st}(i)} {\sigma_{st}}$, where $\sigma_{st}$ is the total number of shortest paths from $s$ to $t$ and $\sigma_{st}(i)$ is the total number of shortest paths which contains vertex $i$.
    \item \bf{Closeness centrality} $Clo(v) = \frac {n-1} {\sumt_{t\in V(G)} dist(v, t)}$. 
    \item \bf{Local efficiency} $E_{loc}(G) = \frac 1 {n} \sumt_{v\in V(G)} E_{glob}(N(v)).$
    \item \bf{Radiality} $Rad(v) = \frac {\sumt_{t\in V(G), t\neq v} (diam(G)+1-dist(v, t))} {n-1}$.
    \item \bf{Stress} $Str(i) = \sumt_{s,t\in V(G),\; s\neq t\neq i} \sigma_{st}(i)$, where $\sigma_{st}(i)$ is the total number of shortest paths from $s$ to $t$ which contains vertex $i$.
\end{enumerate}

Note that all centralities are non-negative and $D(G), E_{glob}, E_{loc}, c_i, C_{WS}, C(G)$ are less or equal 1.

\section{Main results.}
All subsequent lemmas and theorems are given for a simple undirected graph $G$ without pendant vertices. It also can be defined to a simple graph with pendant vertices if every function where $d_i-1$ is in the denominator are defined to be equal to $0$ for all cases $d_i = 1$.

Let's define in the similar way as for local efficiency \emph{the local average shortest path length, local betweenness, local stress and local radiality centralities}.

Let's 
\begin{enumerate}
    \item $L\bigl(N(i)\bigr) = \frac 1 {d_i(d_i-1)} \sumt_{v, w\in N(i)} \dist(v, w)$ be the average shortest path length for vertices of the neighborhood $N(i)$, where the average shortest path length is defined in the whole graph $G$,
    \item $BC(i, N(i)) = \sumt_{s,t\in N(i),\; s\neq t} \frac {\sigma_{st}(i)} {\sigma_{st}}$ be the betweenness centrality for vertices of the neighborhood $N(i)$, where shortest paths are defined in the whole graph $G$, and $BC_{loc}(G) = \frac 1 n \sumt_{i\in V(G)} \frac {BC(i, N(i))} {d_i(d_i-1)}$ --- \bf{the local betweenness centrality},
    \item $\Rad(v, N(i)) = \frac {\sumt_{t\in N(i), t\neq v} \big(\diam(N(i))+1-\dist(v, t)\big)} {d_i-1}$ be the  radiality centrality for vertices of the neighborhood $N(i)$, where shortest paths and the diameter are defined in the whole graph $G$, and $\Rad_{loc}(G) = \frac 1 n \sumt_{i\in V(G)} \frac {\sumt_{v\in N(i)} \Rad(v, N(i))} {d_i}$ --- \bf{the local radiality centrality}.
\end{enumerate}

Let's prove a lemma about a relation between average shortest path length between vertices in the neighborhood of $i$ and local clustering coefficient of this vertex.

\begin{lm}\label{lm1}
    $$L(N(i)) = 2-c_i.$$
\end{lm}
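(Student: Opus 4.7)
The plan is to unfold the definition of $L(N(i))$ and exploit the fact that any two distinct neighbors of $i$ lie at distance at most $2$ in $G$ via the common neighbor $i$.

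First I would note that for any $v, w \in N(i)$ with $v \neq w$, both are adjacent to $i$, so the path $v - i - w$ witnesses $\dist(v,w) \leq 2$. Since $v \neq w$ and edges have unit length, $\dist(v,w) \in \{1,2\}$, with $\dist(v,w)=1$ iff $vw \in E(N(i))$ and $\dist(v,w) = 2$ otherwise.

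Next I would split the sum accordingly. The number of ordered pairs $(v,w)$ of distinct neighbors of $i$ with $vw \in E(N(i))$ is $2|E(N(i))|$, and the remaining $d_i(d_i-1) - 2|E(N(i))|$ ordered pairs contribute distance $2$ each. Thus
\begin{equation*}
\sumt_{v,w \in N(i),\, v\neq w} \dist(v,w) = 2|E(N(i))| \cdot 1 + \bigl(d_i(d_i-1) - 2|E(N(i))|\bigr)\cdot 2 = 2 d_i(d_i-1) - 2|E(N(i))|.
\end{equation*}

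Dividing by $d_i(d_i-1)$ and recognizing $\dfrac{2|E(N(i))|}{d_i(d_i-1)} = c_i$ yields $L(N(i)) = 2 - c_i$. There isn't really a hard step here; the only subtlety is the double-counting convention (ordered vs.\ unordered pairs) in the definition of $L(N(i))$ versus in $|E(N(i))|$, and remembering that the distances are computed in all of $G$ rather than inside $N(i)$ -- but the upper bound $\dist(v,w)\le 2$ collapses this distinction since the short path through $i$ is already available globally.
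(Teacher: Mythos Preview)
Your proof is correct and follows essentially the same route as the paper: split the ordered pairs of distinct neighbors into edges (distance $1$) and non-edges (distance $2$ via $i$), count $2|E(N(i))|$ versus $d_i(d_i-1)-2|E(N(i))|$, and simplify to $2-c_i$. Your write-up is in fact a bit more explicit about the ordered-pair bookkeeping and the fact that distances are taken in $G$, both of which the paper handles more tersely.
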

\begin{proof}
 $$
L(N(i)) = \frac 1 {d_i(d_i-1)} \sumt_{s,t\in N(i), s\neq t} dist(s,t) = \frac 1 {d_i(d_i-1)} \sumt_{(s,t)\in E(N(i))} dist(s,t)+ \sumt_{s,t\in N(i), (s,t)\notin E(N(i))} dist(s,t) = 
$$
$$
 = \frac 1 {d_i(d_i-1)} (2 |E(N(i))| + \sumt_{(s,i), (i, t)\in E(G), (s,t) \notin E(G)} dist(s,t)) = 
$$ 
$$ 
= \frac 1 {d_i(d_i-1)} (2 |E(N(i))|+2 (d_i(d_i-1) - 2 | E(N(i))|)) = 2 - c_i.
$$
Note that shortest paths for vertices in $N(i)$ are defined corresponding to whole graph $G$.
\end{proof}

Let's prove theorem about a connection between local efficiency and average clustering coefficient of a graph.
\begin{thm}
$$E_{loc}(G)= \frac 1 2 (1+C_{WS}(G)).$$
\end{thm}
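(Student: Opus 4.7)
The plan is to mirror the computation in Lemma~\ref{lm1}, replacing distances by their reciprocals. Concretely, for a fixed vertex $i$, I expect to evaluate the local global efficiency $E_{glob}(N(i))$ directly by splitting the sum over ordered pairs $(s,t) \in N(i)\times N(i)$, $s\neq t$, into two groups: those pairs $(s,t)$ with $(s,t)\in E(G)$, for which $\dist(s,t)=1$, and those pairs with $(s,t)\notin E(G)$, for which $\dist(s,t)=2$, since $s$ and $t$ are both adjacent to $i$ and thus joined by the length-$2$ path $s$--$i$--$t$. This is the same dichotomy used in Lemma~\ref{lm1}, and it is the only nontrivial ingredient.

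Next I would count: there are exactly $2|E(N(i))|$ ordered adjacent pairs and therefore $d_i(d_i-1)-2|E(N(i))|$ ordered non-adjacent pairs. Substituting into the definition of $E_{glob}(N(i))$ gives
\[
E_{glob}(N(i)) = \frac{1}{d_i(d_i-1)}\Bigl(2|E(N(i))|\cdot 1 + \bigl(d_i(d_i-1)-2|E(N(i))|\bigr)\cdot\tfrac{1}{2}\Bigr).
\]
Simplifying yields $E_{glob}(N(i)) = \tfrac{1}{2} + \tfrac{|E(N(i))|}{d_i(d_i-1)} = \tfrac{1}{2}(1+c_i)$, by the definition of the local clustering coefficient.

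Finally I would average over $i\in V(G)$:
\[
E_{loc}(G) = \frac{1}{n}\sum_{i\in V(G)} E_{glob}(N(i)) = \frac{1}{n}\sum_{i\in V(G)}\tfrac{1}{2}(1+c_i) = \tfrac{1}{2}\bigl(1+C_{WS}(G)\bigr),
\]
which is the claim. There is no real obstacle: the only step where one must be careful is the same one flagged after Lemma~\ref{lm1}, namely that distances inside $N(i)$ are measured in the whole graph $G$, so that any two vertices of $N(i)$ are automatically within distance $2$ via $i$; this is what forces the reciprocal distances to take only the two values $1$ and $\tfrac{1}{2}$ and makes the calculation collapse to a linear expression in $c_i$.
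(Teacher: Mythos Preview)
Your argument is correct and is essentially the paper's second proof: both split the ordered pairs in $N(i)$ into adjacent pairs (distance~$1$) and non-adjacent pairs (distance~$2$ via $i$), obtaining $E_{glob}(N(i))=\tfrac12(1+c_i)$ before averaging over $i$. The paper also records a first proof that sandwiches $2E_{glob}(N(i))$ between $3-L(N(i))$ and $1+D(N(i))$ via a result of Strang et al.\ and then invokes Lemma~\ref{lm1}, but your direct computation matches the paper's self-contained second proof.
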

\begin{proof}
Let's give two proofs of this fact:
\begin{enumerate}
\item  Note that by definition $D(N(i)) = c_i$. In the article~\cite{Strang} it was proved that
 $$
3-L(N(i))\leq 2 E_{glob}(N(i))\leq 1+D(N(i)).
$$
Using lemma~\ref{lm1}
$$
3-(2-c_i)\leq 2 E_{glob}(N(i))\leq 1+c_i.
$$
Note that shortest paths for vertices in $N(i)$ are defined corresponding to whole graph $G$. Averaging by $i$ ends the proof.
    \item Let's rewrite the local clustering coefficient formula:
$$
c_i = \frac {\sumt_{(s, t)\in E(N(i))} 1} {d_i(d_i-1)},
$$
$$
\frac 1 2 (1+c_i) = \frac 1 2  \frac {\sumt_{(s, t)\in E(N(i))} 1 + \sumt_{(s, t)\in E(N(i))} 1+ \sumt_{s,t\in V(N(i)), (s, t)\notin E(N(i))} 1} {d_i(d_i-1)} = 
$$
$$
 = \frac {\sumt_{(s, t)\in E(N(i))} 1 + \sumt_{s,t\in V(N(i)),(s, t)\notin E(N(i))} \frac 1 2} {d_i(d_i-1)} = \frac {\sumt_{s, t\in V(N(i))} \frac 1 {dist(s,t)} } {d_i(d_i-1)} = E_{glob}(N(i)).
$$
Averaging by $i$ ends the proof.
\end{enumerate}
\end{proof}

Let's prove theorem about a connection between average clustering coefficient and stress centrality.

\begin{thm}\label{thm2}
$$
C_{WS}(G)\geq 1- \frac 1 n \sumt_{i\in V(G)} \frac {Str(i)} {d_i(d_i-1)}.
$$
\end{thm}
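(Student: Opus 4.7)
The plan is to reduce the averaged inequality to a pointwise statement and then exploit a very direct structural observation: every pair of non-adjacent neighbors of $i$ contributes at least one shortest path through $i$.

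First, I would rearrange the claim. Multiplying through by $n$, it suffices to prove the pointwise bound
$$c_i + \frac{\Str(i)}{d_i(d_i-1)} \geq 1 \quad \text{for every } i\in V(G),$$
since averaging over $i$ reproduces the inequality in the statement. Clearing denominators and using $c_i d_i(d_i-1) = 2|E(N(i))|$, this is equivalent to
$$2|E(N(i))| + \Str(i) \geq d_i(d_i-1).$$

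Next, I would estimate $\Str(i)$ from below using only the pairs of neighbors of $i$ that are not joined by an edge. If $s,t \in N(i)$ with $s \neq t$ and $(s,t) \notin E(G)$, then $\dist(s,t) = 2$ (the path $s$--$i$--$t$ has length 2, and no shorter path can exist since $s\neq t$ and $(s,t)\notin E(G)$). Hence $s$--$i$--$t$ is a shortest path from $s$ to $t$ passing through $i$, giving $\sigma_{st}(i) \geq 1$. Counting ordered pairs, the number of such $(s,t)$ is exactly $d_i(d_i-1) - 2|E(N(i))|$. Since $\Str(i)$ sums $\sigma_{st}(i)$ over all ordered pairs $(s,t)$ with $s\neq t\neq i$ (in particular including all ordered pairs in $N(i)$), I obtain
$$\Str(i) \geq d_i(d_i-1) - 2|E(N(i))|,$$
which rearranges to the desired pointwise inequality.

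Finally, averaging over $i \in V(G)$ produces
$$\frac{1}{n}\sum_{i\in V(G)}\left(c_i + \frac{\Str(i)}{d_i(d_i-1)}\right) \geq 1,$$
i.e., $C_{WS}(G) \geq 1 - \frac{1}{n}\sum_i \frac{\Str(i)}{d_i(d_i-1)}$. There is no real obstacle here; the only care needed is to treat the pairs as ordered (matching the convention in the definition of $\Str$) and to note that the bound on $\Str(i)$ uses only the contribution from $N(i) \times N(i)$, while shortest paths through $i$ between vertices outside $N(i)$ can only help. This also shows that equality holds in the pointwise bound exactly when every non-edge of $N(i)$ has a unique geodesic in $G$, namely the one through $i$.
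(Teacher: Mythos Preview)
Your proof is correct and follows essentially the same route as the paper: bound $\Str(i)$ from below by the number of ordered non-adjacent pairs in $N(i)$, deduce the pointwise inequality $\Str(i)/(d_i(d_i-1)) \geq 1 - c_i$, and average over $i$. Your closing equality remark is slightly off, since for non-adjacent $s,t\in N(i)$ one always has $\sigma_{st}(i)=1$ (the only length-$2$ walk through $i$ is $s\text{--}i\text{--}t$), so equality in the pointwise bound holds precisely when no geodesic through $i$ has an endpoint outside $N(i)$; the paper records instead the sufficient condition $\diam(G)=2$.
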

\begin{proof}
Note that $\forall j,k\in N(i): (j,k)\notin E(N(i))$ the shortest path between $j$ and $k$ is $j\rightarrow i\rightarrow k$. Therefore,
$$
Str(i)\geq 2( \frac {d_i(d_i-1)} 2-|E(N(i))|),
$$
$$
\frac 1 {d_i(d_i-1)} Str(i)\geq 1-c_i,
$$
Averaging by $i$
$$
C_{WS}(G) \geq \frac 1 n \sumt_{i\in V(G)} (1- \frac {Str(i)} {d_i(d_i-1)}).
$$
Note that for $diam(G) = 2$ holds an equality.
\end{proof}

Let's prove theorem about a relation between average clustering coefficient and local betweenness centrality.

\begin{thm}\label{thm3}
$$
C_{WS}(G)\leq 1- BC_{loc}(G).
$$
\end{thm}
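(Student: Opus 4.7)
The plan is to prove the inequality vertex-by-vertex and then average. Concretely, it suffices to show that for every vertex $i$,
$$c_i + \frac{BC(i, N(i))}{d_i(d_i-1)} \leq 1,$$
because averaging over $i$ then gives $C_{WS}(G) + BC_{loc}(G) \leq 1$. Multiplying through by $d_i(d_i-1)$ and using $c_i = \frac{2|E(N(i))|}{d_i(d_i-1)}$, this reduces to
$$2|E(N(i))| + BC(i, N(i)) \leq d_i(d_i-1).$$

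The key observation is the following dichotomy for pairs $s,t \in N(i)$ with $s\neq t$. If $(s,t) \in E(G)$, then the unique shortest path from $s$ to $t$ has length one and does not pass through $i$, so $\sigma_{st}(i) = 0$ and the corresponding term in $BC(i, N(i))$ vanishes. If $(s,t) \notin E(G)$, then since both $s$ and $t$ are adjacent to $i$ we have $\dist(s,t) = 2$ (the path $s \to i \to t$ is shortest), and in particular $\sigma_{st}(i) \leq \sigma_{st}$, so $\frac{\sigma_{st}(i)}{\sigma_{st}} \leq 1$.

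Therefore, restricting the defining sum of $BC(i, N(i))$ to the non-edge pairs and bounding each term by $1$,
$$BC(i, N(i)) \leq \#\{(s,t) : s,t\in N(i),\, s\neq t,\, (s,t) \notin E(N(i))\} = d_i(d_i-1) - 2|E(N(i))|,$$
which is exactly what we needed. Averaging over $i$ then yields $C_{WS}(G) \leq 1 - BC_{loc}(G)$.

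There is no real obstacle here; the only thing to be careful about is that the betweenness in the definition of $BC(i, N(i))$ uses shortest paths in the whole graph $G$, not in the induced subgraph on $N'(i)$, so one has to verify that non-adjacent pairs in $N(i)$ still have distance exactly $2$ in $G$ (which is immediate since $i$ joins them). The argument will also give, as a by-product, the condition for equality: we get equality for vertex $i$ exactly when every shortest $s$–$t$ path for non-adjacent $s,t \in N(i)$ passes through $i$, which for instance holds automatically when $\diam(G)=2$.
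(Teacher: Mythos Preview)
Your argument is correct and matches the paper's proof essentially line for line: both restrict the betweenness sum to non-adjacent pairs in $N(i)$, bound each term by $1$, obtain $BC(i,N(i))\le d_i(d_i-1)-2|E(N(i))|$, divide, and average.

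One small correction concerns your final remark on equality. Your characterization ``every shortest $s$--$t$ path for non-adjacent $s,t\in N(i)$ passes through $i$'' is right (and, since exactly one such path goes through $i$, this is equivalent to $\sigma_{st}=1$ for all such pairs). However, the claim that this holds automatically when $\diam(G)=2$ is false: take $G=C_4$, pick any vertex $i$, and observe that its two neighbours are non-adjacent but have \emph{two} shortest paths between them, only one of which uses $i$. You have likely imported the equality condition from Theorem~\ref{thm2} (stress), where $\diam(G)=2$ does force equality. The paper's equality criterion for the present theorem is instead that each $N(i)$ be a disjoint union of complete graphs.
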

\begin{proof}
Let's note that
$$
BC(i, N(i)) = \sumt_{j,k\in N(i), \; (j,k)\notin E(N(i))} \frac {1} {\sigma_{jk}} \leq  \sumt_{j,k\in N(i), \; (j,k)\notin E(N(i))} 1 = d_i(d_i-1)- 2|E(N(i))|,
$$
$$
\frac {BC(i, N(i))} {d_i(d_i-1)}\leq 1-c_i.
$$
Averaging by $i$
$$
C_{WS}(G)\leq \frac 1 n \sumt_{i\in V(G)} (1- \frac {BC(i, N(i))} {d_i(d_i-1)}).
$$
Note that the equality holds, if there exists a unique shortest path between any two vertices of the neighbourhood $N(i)$. This holds if and only if there are no average shortest paths of the length 2 in $N(i)$, and thus $N(i)$ should be the disjoint union of several complete graphs for any vertex $i$. 
\end{proof}

By using theorems~\ref{thm2} and \ref{thm3} an estimation of average shortest path in the neighborhood of $i$ is obtained.
\begin{cor}
$$
\frac {BC(i, N(i))} {d_i(d_i-1)}\leq L(N(i))-1\leq \frac { Str(i)} {d_i(d_i-1)}.
$$
\end{cor}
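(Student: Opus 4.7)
My plan is to observe that both sides of the target inequality can be reduced to the single quantity $1-c_i$, at which point the corollary drops out of the per-vertex estimates already established in the proofs of Theorems~\ref{thm2} and \ref{thm3}. The key is to work \emph{before} the averaging step in those theorems.

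First, I would apply Lemma~\ref{lm1} directly: since $L(N(i)) = 2-c_i$, we immediately get $L(N(i))-1 = 1-c_i$. So the chain we want reduces to
$$
\frac{BC(i,N(i))}{d_i(d_i-1)} \leq 1-c_i \leq \frac{Str(i)}{d_i(d_i-1)}.
$$

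Next, I would extract these two inequalities from the earlier proofs. The right inequality is exactly the per-vertex estimate $\frac{1}{d_i(d_i-1)} Str(i) \geq 1-c_i$ derived in the proof of Theorem~\ref{thm2}, which came from noting that each non-adjacent pair $j,k \in N(i)$ contributes at least $2$ to $Str(i)$ via the length-2 path $j\to i\to k$. The left inequality is the per-vertex estimate $\frac{BC(i,N(i))}{d_i(d_i-1)} \leq 1-c_i$ derived in the proof of Theorem~\ref{thm3}, obtained by bounding each fraction $\frac{1}{\sigma_{jk}} \leq 1$ in the sum defining $BC(i,N(i))$, which leaves at most $d_i(d_i-1) - 2|E(N(i))|$ terms.

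Assembling these three facts gives the corollary without any further work; no averaging over $i$ is needed here, since the statement is pointwise in $i$. There is no real obstacle -- the entire content is already present in the preceding results, and the only ``insight'' is recognizing that Lemma~\ref{lm1} lets one interpret the common quantity $1-c_i$ as $L(N(i))-1$, converting the sandwich on $1-c_i$ into the claimed sandwich on $L(N(i))-1$.
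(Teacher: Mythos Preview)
Your proposal is correct and matches the paper's approach: the paper simply says the corollary follows ``by using theorems~\ref{thm2} and \ref{thm3}'', and you have spelled out exactly how---extract the per-vertex inequalities $\frac{BC(i,N(i))}{d_i(d_i-1)}\leq 1-c_i$ and $1-c_i\leq \frac{Str(i)}{d_i(d_i-1)}$ from those proofs before averaging, then invoke Lemma~\ref{lm1} to rewrite $1-c_i$ as $L(N(i))-1$. There is nothing to add.
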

Note that shortest paths for vertices in $N(i)$ are defined corresponding to whole graph $G$.

Let's prove lemma about a relation between average closeness centrality and average shortest path length in graph.

\begin{lm}\label{lm2}
$$
\frac 1 {n} \sumt_{v\in V(G)} Clo(v) \geq \frac {1} {L(G)}.
$$
\end{lm}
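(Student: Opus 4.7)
The plan is to unfold both sides into the shortest-path sums and reduce the inequality to the classical arithmetic-harmonic mean inequality (equivalently, Cauchy--Schwarz).

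First I would introduce the abbreviation $S_v = \sum_{t\in V(G)} \dist(v,t)$ so that $Clo(v) = (n-1)/S_v$ and $L(G) = \frac{1}{n(n-1)}\sum_{v\in V(G)} S_v$. Rewriting both sides of the claimed inequality in terms of the $S_v$, the left-hand side becomes $\tfrac{n-1}{n}\sum_v \tfrac{1}{S_v}$ and the right-hand side becomes $\tfrac{n(n-1)}{\sum_v S_v}$. After cancelling the common factor $n-1$, the inequality is equivalent to
\[
\Bigl(\sum_{v\in V(G)} S_v\Bigr)\Bigl(\sum_{v\in V(G)} \tfrac{1}{S_v}\Bigr) \geq n^{2}.
\]

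The second step is simply to invoke the Cauchy--Schwarz inequality
\[
\Bigl(\sum_v S_v\Bigr)\Bigl(\sum_v \tfrac{1}{S_v}\Bigr) \geq \Bigl(\sum_v \sqrt{S_v}\cdot\tfrac{1}{\sqrt{S_v}}\Bigr)^{2} = n^{2},
\]
which is legitimate because $G$ has no pendant vertices and is implicitly assumed connected for $L(G)$ and $Clo$ to be finite, so all $S_v$ are strictly positive and finite. This finishes the proof, and one also obtains as a by-product that equality holds precisely when all $S_v$ are equal, i.e.\ when every vertex has the same closeness.

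There is essentially no obstacle here; the only subtlety is just to be careful that the proof is phrased so that the identification $Clo(v)=(n-1)/S_v$ is valid, i.e.\ that $S_v>0$, which follows from the standing assumption that $G$ has no isolated (or pendant) vertices and is connected.
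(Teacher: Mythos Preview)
Your proof is correct and follows essentially the same route as the paper: both reduce the claim to the arithmetic--harmonic mean inequality applied to the numbers $S_v=\sum_t \dist(v,t)$ (you phrase it via Cauchy--Schwarz, which is the same inequality in this setting), and both note that equality holds precisely when all $S_v$ coincide.
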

\begin{proof}
By the inequality of harmonic mean and arithmetic mean
$$
\frac 1 {n} \sumt_{v\in V(G)} Clo(v) = \frac 1 n \sumt_{v\in V(G)} \frac {n-1} {\sumt_{t\in V(G)} dist(v, t)}\geq \frac {n(n-1)} {\sumt_{v, t\in V(G)} dist(v, t)} = \frac 1 {L(G)}.
$$
Note that an equality holds when all average shortest path lengths from any vertex to all remaining vertices are equal.
\end{proof}

Now let's prove theorem about a relation between average clustering coefficient and closeness centrality.
\begin{thm}
$$
\frac 1 {2-C_{WS}(G)} \leq \frac 1 n \sumt_{i\in V(G)} \frac { \sumt_{v\in N(i)} \overline{Clo}(v)} {d_i}.
$$
\end{thm}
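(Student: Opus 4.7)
The plan is to combine Lemma~\ref{lm1} and Lemma~\ref{lm2} applied on the level of neighborhoods, then average over $i$. The proof of Lemma~\ref{lm2} only used the harmonic--arithmetic mean inequality on the positive quantities $\sum_t \dist(v,t)$; nothing in that argument depended on the full graph being the ambient set. So the same reasoning applies verbatim if we replace $V(G)$ by $N(i)$ and $L(G)$ by $L(N(i))$, where distances are still taken in $G$. This gives, for every vertex $i$,
\[
\frac{1}{d_i}\sum_{v\in N(i)} \overline{Clo}(v) \;\geq\; \frac{1}{L(N(i))}.
\]

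Next, I would plug in Lemma~\ref{lm1}, which identifies $L(N(i))$ with $2-c_i$. This turns the previous bound into
\[
\frac{1}{d_i}\sum_{v\in N(i)} \overline{Clo}(v)\;\geq\;\frac{1}{2-c_i}.
\]
Summing over $i\in V(G)$ and dividing by $n$, the left-hand side becomes exactly the quantity appearing on the right of the theorem.

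It remains to show $\frac{1}{n}\sum_i \frac{1}{2-c_i}\geq \frac{1}{2-C_{WS}(G)}$. Since each $c_i\in[0,1]$ the numbers $2-c_i$ are positive, and the harmonic--arithmetic mean inequality applied to these $n$ values gives
\[
\frac{1}{n}\sum_{i\in V(G)} \frac{1}{2-c_i}\;\geq\;\frac{1}{\tfrac{1}{n}\sum_i (2-c_i)}\;=\;\frac{1}{2-C_{WS}(G)},
\]
which chains with the previous inequality to produce the claimed bound.

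There is no genuine obstacle in this argument; the only point requiring a sentence of justification is the transfer of Lemma~\ref{lm2} from $G$ to the subset $N(i)$, together with the fact that $\overline{Clo}(v)$ is defined using the same distances that appear in $L(N(i))$, so that the HM--AM step is valid. Everything else is a mechanical chain: Lemma~\ref{lm2} (applied locally) $\Rightarrow$ Lemma~\ref{lm1} $\Rightarrow$ HM--AM on the $c_i$'s.
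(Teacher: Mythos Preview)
Your proof is correct and follows essentially the same route as the paper: apply Lemma~\ref{lm2} on each neighborhood $N(i)$, substitute $L(N(i))=2-c_i$ from Lemma~\ref{lm1}, then use the harmonic--arithmetic mean inequality on the numbers $2-c_i$ to pass from $\frac{1}{n}\sum_i\frac{1}{2-c_i}$ to $\frac{1}{2-C_{WS}(G)}$. Your added remark justifying the transfer of Lemma~\ref{lm2} to $N(i)$ (same distances, purely HM--AM argument) is exactly the point the paper leaves implicit.
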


\begin{proof}
By lemma~\ref{lm2} 
$$
\frac 1 {d_i}  \sumt_{v\in N(i)} \overline{Clo}(v) \geq \frac 1 {L(N(i))} = \frac 1 {2-c_i}.
$$
By the inequality of harmonic mean and arithmetic mean (since $\forall i\in V(G), \;0\leq c_i\leq 1$):
$$
\frac 1 n \sumt_{i\in V(G)} \frac { \sumt_{v\in N(i)} \overline{Clo}(v)} {d_i} \geq \frac 1 n \sumt_{i\in V(G)}  \frac 1 {2-c_i}\geq \frac n {\sumt_{i\in V(G)} (2-c_i)} = \frac 1 {2-C_{WS}(G)}.
$$
\end{proof}

Let's prove lemma about a relation between average shortest path length and average radiality.

\begin{lm}\label{lm3}
$$
\frac 1 {n} \sumt_{v\in V(G)} {Rad}(v) = diam(G)+1 - L(G).
$$
\end{lm}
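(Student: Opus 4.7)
The plan is a direct computation by unfolding the definitions. I would substitute the definition of $\Rad(v)$ into the left-hand side and observe that the averaging factor becomes $\frac{1}{n(n-1)}$, which is exactly the factor appearing in the definition of $L(G)$. This is the key alignment that makes the result fall out.

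More concretely, first I would write
\[
\frac{1}{n}\sum_{v\in V(G)} \Rad(v) = \frac{1}{n(n-1)}\sum_{v\in V(G)}\sum_{t\in V(G),\,t\neq v}\bigl(\diam(G)+1-\dist(v,t)\bigr),
\]
and then split the inner expression into the constant part $\diam(G)+1$ and the distance part $\dist(v,t)$. The constant part contributes $\diam(G)+1$ because there are exactly $n(n-1)$ ordered pairs $(v,t)$ with $v\neq t$, so summing the constant and dividing by $n(n-1)$ gives $\diam(G)+1$. The distance part is, by definition, precisely $L(G)$.

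There is no real obstacle here; it is a purely algebraic rearrangement of sums. The only thing to be mildly careful about is that the double sum $\sum_{v}\sum_{t\neq v}$ is the same as $\sum_{s\neq t}$ used in the definition of $L(G)$ (both are sums over ordered pairs of distinct vertices), so the normalization constants match without any extra factor of two.

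Putting the two pieces together yields
\[
\frac{1}{n}\sum_{v\in V(G)}\Rad(v) = \bigl(\diam(G)+1\bigr) - L(G),
\]
which is the claimed identity.
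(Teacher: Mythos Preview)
Your proof is correct and follows exactly the same approach as the paper: a direct substitution of the definition of $\Rad(v)$, splitting the sum into the constant term $\diam(G)+1$ and the distance term, and recognizing the latter as $L(G)$.
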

\begin{proof}
The proof holds from definition
$$
\frac 1 {n} \sumt_{v\in V(G)} {Rad}(v) = \frac 1 {n} \sumt_{v\in V(G)} \frac {(n-1) (diam(G)+1)- \sumt_{t\in V(G),\; t\neq v} dist(v,t))} {n-1} = diam(G)+1 -L(G).
$$
\end{proof}

Let's prove theorem about a relation between average clustering coefficient and local radiality.

\begin{thm}\label{thm:rad}
$$
C_{WS}(G) = \Rad_{loc}(G)-1 + \frac {\#\{N(i) \text{ which are complete graphs}\}} {n}.
$$
\end{thm}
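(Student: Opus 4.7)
The plan is to mirror the proof of Lemma~\ref{lm3} at the local level, i.e.\ to derive the analogue
$$
\frac{1}{d_i}\sum_{v\in N(i)} \Rad(v, N(i)) \;=\; \diam(N(i)) + 1 - L(N(i)),
$$
by the identical algebraic manipulation used in Lemma~\ref{lm3}: expand the definition of $\Rad(v, N(i))$, the factor $\diam(N(i))+1$ pulls out, and the double sum of $\dist(v,t)$ collapses into $(d_i-1)\cdot L(N(i))$. Then I will substitute $L(N(i)) = 2 - c_i$ from Lemma~\ref{lm1}, which leaves us with
$$
\frac{1}{d_i}\sum_{v\in N(i)} \Rad(v, N(i)) \;=\; \diam(N(i)) - 1 + c_i.
$$

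The crucial observation is the dichotomy for $\diam(N(i))$. Since every two vertices of $N(i)$ are both adjacent to $i$ in $G$, their distance in $G$ is either $1$ (if joined by an edge) or $2$ (via $i$). Hence $\diam(N(i)) \in \{1, 2\}$, and $\diam(N(i)) = 1$ holds exactly when $N(i)$ induces a complete graph; otherwise $\diam(N(i)) = 2$. This gives the contribution $c_i$ (which is actually $1$) in the complete case and $1 + c_i$ in the non-complete case.

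To finish, I will split the outer average over $i$ according to this dichotomy. Let $K = \{i : N(i) \text{ is complete}\}$. Then
$$
\Rad_{loc}(G) \;=\; \frac{1}{n}\sum_{i\in K} c_i + \frac{1}{n}\sum_{i\notin K}(1 + c_i) \;=\; \frac{1}{n}\sum_{i\in V(G)} c_i \;+\; \frac{n - |K|}{n} \;=\; C_{WS}(G) + 1 - \frac{|K|}{n},
$$
which is exactly the claimed identity after rearrangement. The only step requiring care is justifying the $\diam(N(i)) \in \{1,2\}$ dichotomy and its link with the indicator for $N(i)$ being complete; everything else is routine bookkeeping from Lemmas~\ref{lm1} and~\ref{lm3}, so I do not anticipate a genuine obstacle.
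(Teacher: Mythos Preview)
Your proposal is correct and follows essentially the same approach as the paper: apply the local analogue of Lemma~\ref{lm3}, substitute Lemma~\ref{lm1}, and use the dichotomy $\diam(N(i))\in\{1,2\}$ (equivalently, the indicator $\chi_{K_{d_i}}(N(i))$) to finish by averaging over $i$. The only difference is cosmetic: the paper encodes the case split via the indicator $\chi_{K_{d_i}}$ and averages directly, whereas you partition $V(G)$ into $K$ and its complement; the content is identical.
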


\begin{proof}
By lemma~\ref{lm3}
$$
\frac 1 {d_i} \sumt_{v\in N(i)} \Rad(v, N(i)) =  \diam(N(i))+1-L(N(i)) = \diam (N(i))-1+c_i = c_i+1-\chi_{K_{d_i}}(N(i)),
$$
where 
$
\chi_{K_{d_i}}(N(i)) =
\begin{cases}
	1 & \text{if $N(i) = K_{d_i}$} \\
	0 & \text{otherwise}
\end{cases}
$. Averaging by $i$ ends the proof.
\end{proof}

Let's prove two theorems about a relation between average clustering coefficient and global clustering coefficient.
\begin{thm}\label{thm6} Let's $\forall i,j\in V(G)$ hold if $d_i\leq d_j $ then $ c_i\geq c_j$, then 
$$C_{WS}(G)\leq C(G).$$
\end{thm}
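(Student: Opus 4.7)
My plan is to rewrite the two clustering coefficients so that $C_{WS}(G)$ becomes an unweighted average of the local clustering coefficients $c_i$, and $C(G)$ becomes a weighted average of the same quantities with weights $w_i=d_i(d_i-1)$; the desired inequality is then a direct application of Chebyshev's sum inequality (equivalently, the rearrangement inequality).

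\textbf{Step 1 (rewriting $C(G)$ as a weighted average of the $c_i$'s).} From the definition of the local clustering coefficient, $c_i\cdot d_i(d_i-1)=2|E(N(i))|=\sum_{j,k\in V(G)} a_{ij}a_{jk}a_{ki}$. Summing over $i$ yields
$$\sum_{i\in V(G)} c_i\,d_i(d_i-1)=\sum_{i,j,k\in V(G)} a_{ij}a_{jk}a_{ki},$$
which is exactly the numerator in the definition of $C(G)$. Therefore
$$C(G)=\frac{\sum_{i\in V(G)} w_i\,c_i}{\sum_{i\in V(G)} w_i},\qquad w_i:=d_i(d_i-1),$$
while $C_{WS}(G)=\tfrac1n\sum_{i\in V(G)} c_i$ by definition. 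The theorem thus reduces to a comparison between the unweighted mean of $(c_i)_{i\in V(G)}$ and its $w$-weighted mean.

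\textbf{Step 2 (Chebyshev).} Using the algebraic identity
$$n\sum_{i\in V(G)} w_i c_i-\Big(\sum_{i\in V(G)} w_i\Big)\Big(\sum_{i\in V(G)} c_i\Big)=\tfrac12\sum_{i,j\in V(G)}(w_i-w_j)(c_i-c_j),$$
the sign of the weighted-versus-unweighted difference is determined by the sign of each summand $(w_i-w_j)(c_i-c_j)$. Since $w=d(d-1)$ is a monotone function of $d$ (the no-pendant assumption gives $d_i\geq 2$, so $w$ is strictly increasing in $d$), and the hypothesis couples $c_i$ monotonically to $d_i$, the sign of every factor-pair is controlled uniformly across the graph. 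Summing the pairwise contributions yields the comparison $C_{WS}(G)\leq C(G)$ after dividing by $n\sum_i w_i$.

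\textbf{Anticipated obstacle.} The conceptual work is all in Step 1, in recognizing that $C(G)$ is simply the $d(d-1)$-weighted mean of the local clustering coefficients; after that, Chebyshev's inequality applies essentially mechanically. The only point of care is to match the direction of the monotone coupling provided by the hypothesis with the direction of the claimed inequality between the two means (i.e., to verify that the hypothesis really puts larger $c_i$'s on the vertices with smaller weight or vice versa in the needed sense).
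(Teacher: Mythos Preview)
Your approach is exactly the paper's: rewrite $C(G)$ as the $d_i(d_i-1)$-weighted mean of the local coefficients $c_i$ and then apply Chebyshev's sum inequality. The paper carries out the identical computation with $x_i:=d_i(d_i-1)$.

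The one step you explicitly deferred---checking the direction---is where the argument breaks. Under the hypothesis \emph{as written} ($d_i\le d_j\Rightarrow c_i\ge c_j$), the sequences $(w_i)$ and $(c_i)$ are \emph{oppositely} ordered, so every summand $(w_i-w_j)(c_i-c_j)$ in your identity is $\le 0$. This yields $n\sum_i w_ic_i\le\bigl(\sum_i w_i\bigr)\bigl(\sum_i c_i\bigr)$, i.e.\ $C(G)\le C_{WS}(G)$, the reverse of the claimed inequality. The paper's own proof in fact invokes Chebyshev with the parenthetical ``$d_i\le d_j\Rightarrow x_i\le x_j$ and $c_i\le c_j$'', i.e.\ the \emph{opposite} monotone coupling from the one in the theorem statement; together with the immediately following corollary (same hypothesis $d_i\le d_j\Rightarrow c_i\ge c_j$, opposite conclusion $C_{WS}(G)\ge C(G)$), this makes clear that the theorem statement contains a typo and the intended hypothesis is $d_i\le d_j\Rightarrow c_i\le c_j$. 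Your plan proves that intended statement; just do the sign check explicitly rather than flagging it as an obstacle, since that is precisely where the stated and intended hypotheses diverge.
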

\begin{proof}
Let's re-numerate vertices such that $\forall i\leq j:d_i\leq d_j$.
Note that 
$$
c_i = \frac {\sumt_{j, k\in V(G)} a_{ij} a_{jk} a_{ki}} {d_i (d_i-1)},\;\; C(G) = \frac {\sumt_{i, j, k\in V(G)} a_{ij} a_{jk} a_{ki}} {\sumt_{i\in V(G)} d_i (d_i-1)}.
$$
Indeed,
$$ 
a_{ij} a_{jk} a_{ki} = 
\begin{cases}
	1 & \text{if there exists edge between vertices $j$ and $k$ which adjacent to vertex $i$} \\
	0 & \text{otherwise} 
\end{cases}
$$.
Therefore, 
$$
C_{WS}(G) = \frac 1 n \sumt_{i\in V(G)} \frac {\sumt_{j, k\in V(G)} a_{ij} a_{jk} a_{ki}} {d_i (d_i-1)}.
$$
Let's denote by $x_i = d_i (d_i-1)$. Since $|E(N(i))| = \frac 1 2 \sumt_{j, k\in V(G)} a_{ij} a_{jk} a_{ki}$ and the maximum number of edges in subgraph $N(i)$ equals to $\frac {d_i(d_i-1)} 2$, then $x_i\geq 2,\; 0\leq c_i\leq 1.$ Hence, using Chebyshev's sum inequality ($d_i\leq d_j \Rightarrow x_i\leq x_j \text{ and } c_i\leq c_j$):
$$
\frac 1 n \sumt_{i \in  V(G)} x_i\;  C_{WS}(G) = (\frac 1 n \sumt_{i \in  V(G)} x_i) (\frac 1 n \sumt_{i \in  V(G)} c_i) \leq \frac 1 n  \sumt_{i \in  V(G)} x_i c_i = \frac 1 n \sumt_{i, j, k\in V(G)} a_{ij} a_{jk} a_{ki}.
$$
Therefore,
$$
C_{WS}(G) \leq \frac {\sumt_{i, j, k\in V(G)} a_{ij} a_{jk} a_{ki}} {\sumt_{i \in  V(G)} d_i(d_i-1)}= C(G).
$$
The equality holds when $\forall i,j\in V(G): d_i = d_j$ (i.e. for regular graphs) or when $\forall i,j\in V(G): c_i = c_j$.
\end{proof}

\begin{cor}
Let's $\forall i,j\in V(G)$ hold if $d_i\leq d_j $, then $ c_i\geq c_j$, then 
$$C_{WS}(G)\geq C(G).$$
\end{cor}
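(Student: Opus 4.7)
The plan is to mirror the proof of Theorem~\ref{thm6} almost verbatim, merely replacing the same-sorted version of Chebyshev's sum inequality with its oppositely-sorted dual. First I would re-numerate the vertices so that $d_1 \leq d_2 \leq \ldots \leq d_n$ and set $x_i = d_i(d_i-1) \geq 2$; these $x_i$ are then non-decreasing in $i$, exactly as in Theorem~\ref{thm6}. The hypothesis of the corollary, $d_i \leq d_j \Rightarrow c_i \geq c_j$, now forces the $c_i$, after the same re-numbering, to be non-increasing: $c_1 \geq c_2 \geq \ldots \geq c_n$. So the two sequences $\{x_i\}$ and $\{c_i\}$ are oppositely ordered, which is precisely the situation covered by the reversed form of Chebyshev.

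Next I would invoke Chebyshev's sum inequality for oppositely ordered sequences, which flips the direction used in Theorem~\ref{thm6}:
$$\frac{1}{n}\sum_{i\in V(G)} x_i c_i \;\leq\; \left(\frac{1}{n}\sum_{i\in V(G)} x_i\right)\left(\frac{1}{n}\sum_{i\in V(G)} c_i\right) = \left(\frac{1}{n}\sum_{i\in V(G)} x_i\right) C_{WS}(G).$$
The left-hand side, unfolded exactly as in Theorem~\ref{thm6}, equals $\frac{1}{n}\sum_{i,j,k\in V(G)} a_{ij} a_{jk} a_{ki}$. Dividing both sides by the positive quantity $\frac{1}{n}\sum_{i\in V(G)} d_i(d_i-1)$ gives
$$C(G) \;=\; \frac{\sum_{i,j,k\in V(G)} a_{ij} a_{jk} a_{ki}}{\sum_{i\in V(G)} d_i(d_i-1)} \;\leq\; C_{WS}(G),$$
which is the desired inequality.

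There is no genuine obstacle here: the corollary is really Theorem~\ref{thm6} read through the dual form of Chebyshev's sum inequality, and the only substantive check is that the hypothesis does indeed flip the sort order of $\{c_i\}$ relative to $\{x_i\}$. As in Theorem~\ref{thm6}, equality holds when $G$ is regular (all $d_i$, hence all $x_i$, coincide) or when all local clustering coefficients $c_i$ coincide, since in either case Chebyshev degenerates to an identity.
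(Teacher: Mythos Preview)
Your proposal is correct and is exactly what the paper intends: its entire proof reads ``The proof is the same as in theorem~\ref{thm6},'' and you have carried that out by swapping the same-sorted Chebyshev inequality for its oppositely-sorted dual. Your remark on the equality cases also matches the paper's.
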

The proof is the same as in theorem~\ref{thm6}.

\begin{cor}
For simple regular graphs $G$:
$$C_{WS}(G) = C(G).$$
\end{cor}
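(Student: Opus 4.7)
The plan is to obtain the equality by a direct substitution into the two formulas for $C_{WS}(G)$ and $C(G)$, exploiting the fact that in a $d$-regular graph every vertex degree takes the common value $d$, so $d_i(d_i-1) = d(d-1)$ for every $i$. First I would write
\[
C_{WS}(G) = \frac{1}{n}\sum_{i\in V(G)} \frac{\sum_{j,k\in V(G)} a_{ij}a_{jk}a_{ki}}{d_i(d_i-1)}
\]
and pull the constant factor out of the denominator to get $C_{WS}(G) = \frac{1}{n\,d(d-1)}\sum_{i,j,k} a_{ij}a_{jk}a_{ki}$. Then I would simplify the denominator of $C(G) = \frac{\sum_{i,j,k} a_{ij}a_{jk}a_{ki}}{\sum_i d_i(d_i-1)}$ using $\sum_i d_i(d_i-1) = n\,d(d-1)$, and observe that the two expressions now coincide.

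A second route, which the placement of the corollary right after Theorem~\ref{thm6} suggests, is to deduce it from that theorem directly. Setting $x_i = d_i(d_i-1)$, in a regular graph the sequence $(x_i)$ is constant. Chebyshev's sum inequality, which was the engine of Theorem~\ref{thm6}, becomes an equality whenever one of its two sequences is constant, regardless of the ordering of the other. Hence both the upper bound $C_{WS}(G)\le C(G)$ from Theorem~\ref{thm6} and the lower bound $C_{WS}(G)\ge C(G)$ from the preceding corollary collapse to equalities and pin $C_{WS}(G)$ to $C(G)$.

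I do not anticipate any real obstacle: the content of the statement is simply that when the normalising weights $d_i(d_i-1)$ are all equal, the unweighted average of the local clustering coefficients (which is $C_{WS}(G)$) and their weighted average with weights $d_i(d_i-1)$ (which, after cancellation, is $C(G)$) must agree. I would use the direct computation as the main proof and mention the Chebyshev viewpoint as a remark explaining why the corollary follows from Theorem~\ref{thm6}.
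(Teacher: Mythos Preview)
Your proposal is correct. The paper does not give a separate proof for this corollary; it simply relies on the remark at the end of Theorem~\ref{thm6} that equality in Chebyshev's sum inequality holds when all $d_i$ are equal, which is precisely your second route. One small caveat on that route: the hypotheses of Theorem~\ref{thm6} and of the preceding corollary are \emph{not} automatically met in a regular graph (when all $d_i$ coincide, the condition ``$d_i\le d_j \Rightarrow c_i\ge c_j$'' would force all $c_i$ to be equal), so you cannot quote those two results as black boxes to obtain matching upper and lower bounds. What actually works is exactly what you say next: re-run the Chebyshev step and observe that with one sequence constant it is an identity. Your direct substitution argument is in fact the cleanest and most self-contained version, and it is what I would present as the main proof.
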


For most well-known graphs $C_{WS}(G)\geq C(G)$, but it is not very hard to come up with an example then $C_{WS}(G) < C(G).$ Consider the complete graph $K_n$ and glue to each vertex cycle of the length 4. Then, the vertices of the complete graph have by symmetry the same $c_i > 0$ and new added vertices have  $c_i = 0$. Thus, by theorem~\ref{thm6} for new graph $C_{WS}(G) < C(G).$

\newpage

\end{document}